\newtheorem{theorem}{Theorem}
\newtheorem{lemma}[theorem]{Lemma}
\newtheorem{corollary}[theorem]{Corollary}
\newtheorem{observation}[theorem]{Observation}
\theoremstyle{definition}
\newdimen\arrowsize
\newlength{\arrowlength}
\newlength{\arrowangle}
\newlength{\arrowthickness}
\tikzstyle{vertex}=[circle,inner sep=2.5,minimum size =2mm,semithick,fill=black!20, draw=black]
\tikzstyle{holevertexB}=[circle,inner sep=-0.7,minimum size =2mm,semithick,fill=white, draw=black]
\tikzstyle{holevertexA}=[circle,inner sep=2.5,minimum size =2mm,semithick,fill=white, draw=black]
\tikzstyle{smallcircle}=[circle,inner sep=1.5,fill=white, draw=black]
\tikzstyle{point}=[circle,inner sep=1,fill=black, draw=black]
\tikzstyle{path}=[-slim,thin,rounded corners] 
\tikzstyle{path1}=[-slim,thin,decorate,%
\tikzstyle{path2}=[-stealth,thin,decorate,%
\tikzstyle{brace}=[thin,decorate,decoration=brace]
\tikzstyle{ie}=[thin,dashed,gray]
\colorlet{fillA}{gray!50}
\colorlet{fillB}{gray!15}
\DeclareMathOperator{\Reach}{Reach}
\DeclareMathOperator{\var}{var}
\DeclareMathOperator{\lev}{lev}
\renewcommand{\mid}{\ensuremath{:}}
\renewcommand{\phi}{\varphi}
\renewcommand{\epsilon}{\varepsilon}
\renewcommand{\theta}{\vartheta}
\newcommand{\0}{\emptyset}
\newcommand{\perpcdot}{{\perp}{\kern0.1em{\cdot}\kern0.1em}}
 \newcommand{\Pos}{\mathrm{Pos}}
 \newcommand{\Moves}{\mathrm{Moves}}
\newcommand{\Mon}{\textup{Mon}}
\DeclareMathAlphabet{\mathsc}{OT1}{cmr}{m}{sc}
\newcommand{\ptime}{\ensuremath{\mathsc{Ptime}}\xspace}
\newcommand{\np}{\ensuremath{\mathsc{NP}}\xspace}
\newcommand{\conp}{\ensuremath{\text{co-}\mathsc{NP}}\xspace}
\newcommand{\pspace}{\ensuremath{\mathsc{PSpace}}\xspace}
\newcommand{\aptime}{\ensuremath{\mathsc{APTime}}\xspace}
\newcommand{\nptime}{\ensuremath{\mathsc{NPtime}}\xspace}
\newcommand{\DAGWprob}{\ensuremath{\mathsc{DAGW}}\xspace}
\newcommand{\taut}{\ensuremath{\mathsc{Tautology}}\xspace}
\newcommand{\QBF}{\ensuremath{\mathrm{QBF}}\xspace}
\newcommand{\cupdot}{\mathbin{\mathaccent\cdot\cup}}
\DeclareMathOperator{\MCGAME}{MCgame}
\newcommand{\tw}{tree {}width\xspace}
\newcommand{\Tw}{Tree {}width\xspace}
\newcommand{\dtw}{directed tree {}width\xspace}
\newcommand{\dagw}{DAG-{}width\xspace}
\newcommand{\Dagw}{DAG-{}width\xspace}
\newcommand{\kw}{Kelly-{}width\xspace}
\newcommand{\dw}{D-{}width\xspace}
\DeclareMathOperator{\DAGW}{DAG\text{-}w}
\newcommand{\ie}{i.e.\@\xspace}
\newcommand{\eg}{e.g.,\xspace}
\begin{document}
  \title{DAG-width is PSPACE-complete\thanks{The research of all authors has
    been supported by the European Research Council (ERC) under the European Union’s Horizon 2020 research and
    innovation programme (ERC consolidator grant DISTRUCT, agreement No 648527).}}
  \author[ ]{Saeed Akhoondian Amiri}
  \author[ ]{Stephan Kreutzer}
  \author[ ]{Roman~Rabinovich}
  \affil[ ]{Logic and Semantics, Technische Universit\"at Berlin}
  \affil[ ]{\texttt{\{saeed.amiri,stephan.kreutzer,roman.rabinovich\}\\@tu-berlin.de}}
  \date{}

\maketitle
  
\begin{abstract}

  Berwanger et al.\@ show in~\cite{BerwangerDawHunKreObd12} that for
  every graph~$G$ of size~$n$ and \dagw~$k$ there is a DAG
  decomposition of width~$k$ and size $n^{O(k)}$. This gives a
  polynomial time algorithm for determining the \dagw of a graph for
  any fixed~$k$. However, if the \dagw of the graphs from a class is
  not bounded, such algorithms become exponential. This raises the
  question whether we can always find a DAG decomposition of size
  polynomial in~$n$ as it is the case for \tw and all generalisations
  of \tw similar to \dagw.

  We show that  there is an infinite class of graphs such that every DAG
  decomposition of optimal width has size super-polynomial in~$n$ and,
  moreover, there is no polynomial size DAG decomposition 
  which would approximate an optimal decomposition up to
  an additive constant.

  In the second part we use our construction to prove that deciding
  whether the DAG-width of a given graph is at most a given constant
  is \pspace-complete.

\end{abstract}


\section{Introduction}

In the study of hard algorithmic problems on graphs,
methods derived from structural graph theory have proved to be a
valuable tool. The rich theory of special classes of graphs developed
in this area has been used to identify classes of graphs, such as
classes of bounded tree width or clique width,
on which many computationally hard problems can be solved
efficiently. 
Most of these classes are defined by some structural
property, such as having a tree decomposition of low width, and this structural 
information can be exploited
algorithmically. 

Structural properties of classes of graphs such as tree width, clique
width, definability by excluded minors etc.\@ studied in this context relate to
undirected graphs.
However, in various applications in computer science, directed graphs are
a more appropriate model. 
Given the enormous success
width parameters 
had for problems defined on undirected graphs, it is natural to ask
whether they can also be used to analyse the complexity of hard
algorithmic problems
on digraphs.
While in principle it is 
possible to apply the structure theory for undirected graphs to directed
graphs by ignoring the direction of edges, this implies a significant
information loss. Hence,
for computational problems whose instances are directed graphs,  
methods based on the structure theory for undirected graphs may be
less useful.

\Tw is one of the most successful structural complexity measures. It has several
characterisations coming from seemingly unrelated notions, \eg by
elimination orders or cops and robber games. \Tw is also deeply
connected to graph minors and has numerous algorithmic applications.
The result of several approaches to generalise \tw to digraphs was a
number of structural complexity measures for digraphs. 
Reed \cite{Reed99} and Johnson, Robertson, Seymour and
Thomas~\cite{JohnsonRobSeyTho01} introduced the
concept of \emph{directed tree width} and showed that the $k$-disjoint
paths problem and more general linkage problems can be solved in
polynomial-time on classes of digraphs of bounded directed
tree width. Following this initial proposal, several alternative
notions of width measures for sparse classes of digraphs have been
presented, for instance \emph{directed path width} (see
\cite{Barat06}, initially proposed by Robertson, Seymour and Thomas),
\emph{D-width}~\cite{Safari05},
\emph{DAG-width}~\cite{BerwangerDawHunKreObd12} and
\emph{Kelly-width}~\cite{HunterKre08}. 

In this work we concentrate on \dagw. It distinguishes itself in its
particularly simple definition. DAG decompositions have a clear
structure and the definition of the cops and robber games
characterising \dagw is a straight forward and natural generalisation
of the corresponding game for \tw. However, we show some disadvantages
of \dagw.

A crucial task in designing efficient algorithms for some problems on graphs where some
width is bounded is to find a decomposition of the given graph of
small width. Such decompositions, usually trees or DAGs (directed acyclic graphs), are used to
solve some problem recursively following the decomposition. For \tw and
\dtw one can decompose the graph in fixed parameter tractable
time. For \dw and for \kw such algorithms are not known, but there is
always a decomposition of polynomial size, so it can be found at least
 non-deterministically. 
Only the complexity of \dagw was left as an open problem
as it was not known whether every digraph has a decomposition of
polynomial size.

We show here that deciding the \dagw of a
digraph is not only not in \np (under standard complexity theoretical
assumptions), it is in fact \pspace-complete. In terms of the \dagw
game this exhibits the worst case complexity of such games. This
result is quite unexpected and especially surprising as such a high
complexity was to date only exhibited by a form of cops and robber
games called \emph{domination games} (see
\cite{FominKM03,FominGT11,KreutzerOrd09}). In these games, each cop
not only occupies his current vertex (as in other such games) but a whole neighbourhood of
fixed radius, which essentially allows to simulate set quantification
making the problem \pspace-complete.
The \dagw game, however, is to the best of our
knowledge the only cops and robber game with the usual
capturing condition that exhibits such a complexity.

With the same proof technique we also show that there are classes of
graphs for which any DAG decomposition of optimal width contains a
super-polynomial number of bags. (If \np${}\not={}$\pspace, this would
follow from the previous result, but we show this unconditionally.)
Furthermore, we obtain that for every $\epsilon\in(0,1)$ there are
graphs of DAG-width~$k$ with no polynomial size DAG decomposition
of width at most $k+k^{1-\epsilon}$.

\section{DAG-Width and Cops and Robber Games}
\label{sec:prelims}

\subsection{Preliminaries}
We assume familiarity with basic concepts of graph theory and refer
to~\cite{Diestel12} for background.  All graphs in this paper are
finite, directed and simple, i.e.~they do not have loops or multiple
edges between the same pair of vertices. Undirected graphs are
directed graphs with a symmetric edge relation.  If~$G$ is a graph,
then $V(G)$ is its set of vertices and $E(G)$ is its set of edges. For
a set $X\subseteq V(G)$ we write $G[X]$ for the subgraph of~$G$
induced by~$X$ and $G-X$ for $G[V(G)\setminus X]$. If $X$ is a set of
vertices, we write $\Reach_{G}(X)$ to denote the set of vertices
reachable from a vertex in~$X$. If $X=\{v\}$, we write
$\Reach_G(v)$. If $w\in Reach_G(v)$, we also write $v\le w$ and $v<w$
for the irreflexive variant. A \emph{root} of a graph is a vertex
without incoming edges.
A strongly connected component of a digraph~$G$ is a maximal
subgraph~$C$ of~$G$ which is strongly connected, \ie~between any pair
$u,v\in V(C)$ there are directed paths from~$u$ to~$v$ and from~$v$
to~$u$. All components considered in this paper will be strong and
hence we simply write \emph{component}.  

\subsection{DAG-Width and Cops and Robber Games}
A \emph{DAG decomposition} of~$G$ is a tuple $(D, B)$ where $D$ is a
DAG and $B 
= \{B_d \mid d\in V(D)\}$ is a set of bags, \ie subsets of $V(G)$, such that
\begin{enumerate}
\item $\bigcup_{d\in V(D)} B_d = V(G)$,

\item for all $a,b,c \in D$, if $a < b < c$, then $B_a \cap B_c \subseteq B_b$,

\item for every source $r \in V(D)$, $\Reach_{G}(B_{\ge r}) = B_{\ge r}$
  where $B_{\ge r} = \bigcup_{r\le d} B_d$,

\item for each $(a,b)\in E(D)$, $\Reach_{G - (B_a \cap B_b)}(B_{\ge b}
  \setminus B_a) = B_{\ge b} \setminus B_a$.
\end{enumerate}
The width of $(D, B)$ is $\max_{d\in V(D)} |B_d|$ and its size is $|V(D)|$. The 
\emph{DAG-width} $\DAGW(G)$ of~$G$ is the
minimal width of a DAG decomposition of~$G$.

\paragraph{Cops and Robber Games} \Dagw can be characterised by a cops and
robber game played on a graph~$G$ by a team of cops and a robber. Each cop
occupies a vertex of~$G$ or is outside of the graph. The robber occupies a
component of the graph that arises if we delete vertices occupied by the
cops. Hence, a game position can be described by a pair $(C, R)$, where $C$ is
the set of vertices occupied by cops and~$R$ is robber component. At the
beginning the robber chooses an arbitrary component $R_0$ of the graph and the
game starts at position $(\emptyset, R_0)$. The game is played in rounds. In each
round, from a position $(C, R)$ the cops first announce their next move, \ie~the
set $C'\subseteq V(G)$ of vertices that they will occupy next, and remove the
cops from the vertices $C\setminus C'$ that will not be occupied. Based on the
triple $(C, C', R)$ the robber chooses his new component $R'$ that must be
reachable from $R$ via cop free paths, \ie~a path reachable from $R$ that does
not contain a vertex occupied by a cop which remains on the board.  This
completes a round and the play continues at position $(C', R')$.

\paragraph{Abstract games}
An \emph{abstract game} is a tuple $(V,V_0,E,v_0,\Omega)$ where $(V,E)$ is a
directed graph, in which $V$ denotes the set of all positions and $E$
the set of moves, $V_0 \subseteq V$ is the set of positions in which
Player~0 has to move, $v_0\in V$ is the start position and $\Omega\subseteq
V^\omega$ is the winning condition. A \emph{play} is a maximal sequence $v_0, v_1, \ldots$ such that for all~$i\ge 0$, $(v_i,v_{i+1})\in
E$. Player~0 wins a play $\pi$ if it is finite and ends in a vertex
$v\in V_1\coloneqq V\setminus V_0$ without successors (so Player~1 has
to move, but cannot do this) or $\pi\in\Omega$. A (memoryless)
\emph{strategy} for Player~0 is a partial function $\sigma\colon V\to V$ such
that for all $v\in V$ where $\sigma$ is defined, $(v,\sigma(v))\in
E$. Strategies for Player~1 are defined analogously. A play $v_0,
v_1,\ldots$ is \emph{consistent} with $\sigma$ if for each $v_i\in V_0$ that
has a successor, we have $\sigma(v_i) = v_{i+1}$. We say that~$\sigma$
is \emph{winning} if Player 0 wins every play consistent with~$\sigma$
(and analogously for Player~1). We say that a game position
is \emph{consistent} with~$\sigma$ if there is a play consistent
with~$\sigma$ which contains the position.

\paragraph{The \dagw game} 
The \emph{\dagw game} $(V,V_0,E,v_0,\Omega)$ on a
graph~$G$ is defined as follows. The set of positions is $V \coloneqq \Pos(G)
\coloneqq \Pos_c \cupdot \Pos_r $ where 
\[\Pos_c \coloneqq V_0 \coloneqq  \{(C,R) \mid C\subseteq V(G),  R\subseteq V(G) 
\text{ is a component of }G- C\}\] 
are cop positions and 
\[\Pos_r \coloneqq \{(C,C',R) \mid C, C' \subseteq V(G) \text{ and }R\subseteq
V(G) \text{ is a component of }G-C\}\]
are robber positions. The set of robber moves is defned by 
\begin{align*}
\Moves_r(G) \coloneqq &\{\bigl( (C,C',R),(C',R') \bigr) \mid (C,C',R)\in \Pos_r, (C',R')\in
\Pos_c \\&\text{ and } R'\text{ is a component of } G-C' \text{ with
}R'\subseteq \Reach_{G-(C\cap C')}(R)\}
\end{align*}
and the set of cop moves by 
\[\Moves_c(G) \coloneqq \{\bigl( (C,R), (C,C',R) \bigr) \mid (C,R)\in \Pos_c,
(C,C',R)\in \Pos_r\}\cap \Mon\]
where \[\Mon \coloneqq \{\bigl((C,R),(C,C',R)\bigr)\mid
\Reach_{G-(C\cap C')}(R) = \Reach_{G-C}(R)\}\,,\]
is the \emph{monotonicity condition} saying that while making a move, it is forbidden 
to remove cops if this allows the robber to reach vertices that were unreachable for him 
otherwise. 
The whole set of moves is $\Moves(G) \coloneqq \Moves_c(G) \cupdot \Moves_r(G)$. 
The start position is $(\0,\0,\0)$ and the winning
condition for the cops (\ie for Player~0) is
$\Omega = \emptyset$, \ie the cops lose all infinite plays. 
As the cops always have a possible move (\eg by not changing their vertices), 
they win all finite plays. 

A cop is \emph{free} in a position $(C,R)$ if he is outside of the
graph (\ie $|C|<k$ in the game with $k$ cops) or on 
a vertex $v\in C$ such that $v\notin \Reach_{G-(C\setminus \{v\})}(R)$, \ie 
removing this cop does not lead to non-monotonicity and is thus allowed. We say
that the robber is in some set $A$ of vertices, if $A$ is contained in his component.


In~\cite{BerwangerDawHunKreObd12} Berwanger et al.\@ showed that~$k$
cops have a winning strategy~$\sigma$ on a graph if and only if the graph has a
DAG decomposition of width~$k$. From their proof one can easily infur
a correspondence between the number of bags in the decomposition and
the number of possible positions in the game where the cops play
according to $\sigma$.

\begin{lemma}\label{lemma:decs_and_positions}
If there is a DAG decomposition of a graph $G$ of width $k$ and of
size $n$, then~$k$ cops have a winning strategy such that the number
of positions consistent with this strategy is at most~$n\cdot |G|$.
\end{lemma}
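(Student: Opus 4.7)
The plan is to convert a DAG decomposition $(D,B)$ of $G$ with width $k$ and $|V(D)|=n$ into a winning strategy $\sigma$ for $k$ cops whose consistent positions can be indexed by pairs (node of $D$, robber component). The cops will effectively walk along $D$: at every cop position consistent with $\sigma$ their cop set equals $B_d$ for some $d\in V(D)$, and we maintain the invariant that the current robber component $R$ lies in $B_{\ge d}\setminus B_d$.

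To define $\sigma$, after the robber picks some vertex $v_0$, use condition~(3) to choose a root $r$ of $D$ such that the component of $G$ containing $v_0$ lies inside $B_{\ge r}$, and move the cops to $B_r$. Inductively, at a cop position $(B_d,R)$ with $R\subseteq B_{\ge d}\setminus B_d$, pick any $v\in R$ together with a child $d'$ of $d$ satisfying $v\in B_{\ge d'}\setminus B_d$. Condition~(4) says that $B_{\ge d'}\setminus B_d$ is closed under reachability in $G-(B_d\cap B_{d'})$, and hence in $G-B_d$; this promotes the single-vertex containment to the whole component, giving $R\subseteq B_{\ge d'}\setminus B_d$. The cops then move to $B_{d'}$, and since the new robber component $R'$ is a component of $G-B_{d'}$ contained in $\Reach_{G-(B_d\cap B_{d'})}(R)\subseteq B_{\ge d'}\setminus B_d$, we have $R'\subseteq B_{\ge d'}\setminus B_{d'}$, so the invariant propagates. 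Monotonicity is secured by condition~(2): the intersection bag along any path is carried through the intermediate nodes, so any vertex once released by the cops cannot reenter the robber's component. Finiteness follows because $\sigma$ strictly advances along edges of the DAG~$D$, and at a leaf $d$ the equality $B_{\ge d}=B_d$ leaves the robber no vertex to stand on.

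It remains to count positions consistent with $\sigma$. A cop position $(C,R)$ consistent with $\sigma$ has $C=B_d$ for some $d\in V(D)$ and $R$ a component of $G-B_d$ contained in $B_{\ge d}\setminus B_d$. Because distinct components are pairwise disjoint nonempty subsets of $V(G)$, each $d$ admits at most $|G|$ such components, yielding at most $n\cdot |G|$ cop positions. The corresponding robber position $(B_d,B_{d'},R)$ is uniquely determined by its predecessor under $\sigma$, so it may be charged to the cop position that precedes it and the overall count stays $n\cdot|G|$.

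The main obstacle is the ``component-level'' strengthening of condition~(4): from a single vertex $v\in R$ landing in some $B_{\ge d'}\setminus B_d$ one must conclude that the \emph{entire} component $R$ of $G-B_d$ lies there. Once this step is in hand, constructing $\sigma$, verifying that it is winning and monotone, and the combinatorial counting reduce to routine bookkeeping.
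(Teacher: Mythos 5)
Your proof is correct and takes essentially the same route as the paper: the strategy you construct (cops walk along the DAG occupying bags, maintaining the invariant $R\subseteq B_{\ge d}\setminus B_d$ via condition~(4)) is precisely the strategy the paper imports by citing Theorem~16 of Berwanger et al., and your count of at most $n\cdot|G|$ positions (bags times components, with robber positions charged to their predecessor cop positions) matches the paper's count of cop placements times robber vertices. The only differences are that you reprove the strategy construction inline rather than citing it, and that your one-line monotonicity justification leans on condition~(2) where the real work is done by the guarding of $B_{\ge d'}\setminus B_d$ by $B_d\cap B_{d'}$, which follows from condition~(4) and is already implicit in your invariant.
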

\begin{proof}
  Consider the strategy obtained from a DAG
  decomposition of width $k$ as described in \cite[Theorem
  16]{BerwangerDawHunKreObd12}. In a play consistent with the strategy, the
  cops occupy only sets of vertices that correspond to some bag. Thus
  there are at most $n$ cop placements that can appear in a play. A
  position can be described by the cop placement and  an arbitrary vertex in the robber
  component. There are at most $|G|$ vertices, so the total number of
  positions is at most $n\cdot |G|$.
\end{proof}

\section{The Basic Construction}\label{sec:constr}

Let $s,t\colon \bbN \to \bbN$ be two non-decreasing
functions with $s(n) \in o(n)$ and let $n_0 \in \bbN\setminus\{0\}$
such that for all~$n\ge n_0$ we have $n-s(n)>2$. We also demand that
for all $n\ge n_0$, $s$ and $t$ satisfy $s(n)\ge 2$ and  $t(n)\ge 2$, and,
furthermore, if $t$ is bounded by a constant, then
$s(n)\in o(\frac{n}{\log n})$. Notice that $n_0$ depends only on $s$ and $t$, but not on~$n$.

We define a class of graphs $G_n(s,t)$ of \dagw~$n+1$ and size
$|V(G_n(s,t))| \in O(n^2\cdot t(n))$ (measured in the number of vertices)
such that every DAG decomposition of width $n+1$ has
super-polynomially many bags in the size of $G_n(s,t)$. The
parameters~$s$ and $t$ will be used to determine the difference
between the optimal width of a DAG decomposition and the best possible
width of a polynomial size decomposition. Our proof in the next section works already if
$s(n) = t(n) = 2$ for all~$n$ and the reader is invited to assume
these values at first. We shall consider what changes if~$s$ and~$t$
are different later.

For $n \in\{0,\ldots,n_0-1\}$, the graph $G_n(s,t)$ is a single
vertex without edges. For $n\ge n_0$, the graph $G_n(s,t)$ is
constructed as follows (see Figure~\ref{fig:graph}).  Let $M(n)$ and $C_i(n)$ for
$i\in\{0,\ldots,t(n)-1\}$ be pairwise disjoint sets of vertices, each of
$n-s(n)$ elements. Let $D^s(n)$ be a set of~$s(n)$ elements disjoint from
all $C_i(n)$ and $M(n)$, and let $N(n) = M(n) \cup D^s(n)$. Let $A^s(n)$ be a set
of~$s(n)$ new vertices, and let $B^t(n) = \{b_0(n),\ldots,b_{t(n)-1}(n)\}$ be a
set of~$t(n)$ new vertices.  The graph $G_n(s,t)$ has vertices
\[V(G_n(s,t)) = V(G_{n-s(n)-1}(s,t)) \cupdot A^s(n) \cupdot B^t(n) \cupdot
\bigcup_{i = 0}^{t(n)-1} C_i(n)\cupdot N(n)\,.\]
 We say that vertices from $A^s(n) \cup B^t(n) \cup \bigcup_{i = 0}^{t(n)-1} C_i(n) \cup N(n)$
are in level~$n\ge n_0$. If $n_1$ and $n_2$ are levels, then level~$n_1$ is
\emph{bigger} than level $n_2$ if $n_1>n_2$.

For a set $X$ let $X \choose 2$ be the set
$\{(x,y)\in X^2 \mid x\neq y\}$. The edges are defined by
\begin{align*}
E(G_n(s,t)) = &E(G_{n-s(n)-1}(s,t)) \cup {N(n) \choose 2} \cup \bigcup_{i=0}^{t(n)-1}
{C_i(n) \choose 2} \cup {A^s(n) \choose 2}\\
                     &\cup \bigcup_{i=0}^{t(n)-1} \Bigl( \bigl(N(n) \times
                     C_i(n)\bigr) \cup \bigl(C_i(n) \times D^s(n)
                     \bigr) \cup \bigl(C_i(n) \times
                     \{b_i(n)\}\bigr) \Bigr)\\
                    & \cup
                     \bigl(B^t(n) \times A^s(n)\bigr) \cup
                     \bigl(A^s(n) \times B^t(n) \bigr)\cup \bigl(A^s(n) \times M(n)\bigr)\\
                    & \cup \bigl(N(n) \times V(G_{n-s(n)-1}(s,t))\bigr) \cup \bigl(A^s(n)
                     \times V(G_{n-s(n)-1}(s,t))\bigr)\\
                    & \cup \bigl( V(G_{n-s(n)-1}(s,t))
                     \times A^s(n) \bigr) \cup \bigl(V(G_{n-s(n)-1}(s,t)) \times B^t(n)\bigr)\,.
\end{align*}
In other words, the first line says that $G_n(s,t)$ has all edges from
$G_{n-s(n)-1}(s,t)$, and that $N(n)$, all~$C_i(n)$
for $i\in\{0,\ldots,t(n)-1\}$ and $A^s(n)$ are cliques of sizes~$n$, $n-s(n)$
and~$s(n)$, respectively (\ie every two distinct vertices are connected in both directions). Note also that $B^t(n)$ induces an independent set.

\begin{figure}
  \centering
  \begin{tikzpicture}[scale=0.8]
    \node[draw,circle,inner sep=5mm] (C_0) at (0,0){$M$};
    \node[gray] at (0,-0.4){$n-s(n)$};

    \node[draw,circle,minimum height=5mm,minimum width=10mm] (D) at (2.5,0){$D^s$};
  \draw (C_0) -- (D);

    \node[draw,circle,inner sep=5mm] (C_1) at (3,-3){$C_0$};
    \node[gray] at (3,-3.4){$n-s(n)$};
      \node at (5,-3){$\cdots$};
    \node[draw,circle,inner sep=2.6mm] (C_t) at (7,-3){$C_{t(n)-1}$};
    \node[gray] at (7,-3.4){$n-s(n)$};
   \draw[-slim] (C_0) to (C_1);
   \draw[-slim] (C_0) to (C_t);
   \draw[-slim] (D) to (C_1);
   \draw[-slim] (D) to (C_t);
   \draw[-slim,bend right] (C_1) to (D);
   \draw[-slim,bend right] (C_t) to (D);

    \node[draw,circle,minimum height=5mm,minimum width=10mm] (A) at (-3,-2){$A^s$};
    \draw[-slim] (A) to (C_0);

  \node[vertex] (b_1) at (-4,-5){};
    \node at (-3,-5){$\cdots$};
  \node[vertex] (b_t) at (-2,-5){};
    \node[draw,ellipse,minimum height=5mm,minimum width=30mm] (B) at
    (-3,-5){};
  \node at (-5.2,-5.7){$B^t = \{b_0,\ldots,b_{t(n)-1}\}$};
  \draw (B) to (A);
  \draw[-slim,bend left=40] (C_1) to (b_1);
  \draw[-slim,bend left=40] (C_t) to (b_t);

   \node[draw,rectangle,minimum width=20mm,minimum height=15mm]
   (G_n-3) at (0,-3){$G_{n-s(n)-1}(s,t)$};
   \draw[-slim] (C_0) to (G_n-3);
   \draw[-slim] (D) to (G_n-3);
   \draw            (A) to (G_n-3);
   \draw[-slim] (G_n-3) to (B);

   \node[draw,ellipse,minimum height=2.4cm,minimum width=5cm,opacity=0.5] (N) at
    (1,0){};
    \node at (1.3,1){$N$};

  \end{tikzpicture}
  \caption{The construction of $G_n(s,t)$. Indices $\cdot(n)$ are
    omitted. A directed edge between two subgraphs such as $M$ and
    $C_0$ represents an edge from every vertex in $M$ to every vertex
    in $C_0$. An undirected edge represents edges in both directions.}
  \label{fig:graph}
\end{figure}

For the following lemma the precise definition of~$s$ and $t$ in
$G_{n-s(n)-1}(s,t)$ is inessential.

\begin{lemma}
  \label{lemma:dagw}
The \dagw of $G_n(s,t)$ is $n+1$ for $n\ge n_0$.
\end{lemma}
\begin{proof}
  The $n+1$ cops have the following winning strategy for the \dagw game on
  $G_n(s,t)$. The initial move of the robber must be to choose the whole graph
  as a component. Then the cops occupy $N(n)$ and we can assume that the
  robber chooses some $C_i(n)$ (for $i\in\{0,\ldots,t(n)-1\}$) because
  all other strongly connected components of $G_n(s,t) - N(n)$ have
  incoming edges from all~$C_i(n)$. (So the robber can go to every
  other component that is reachable now also later, see Lemma~5.21
  in~\cite{RabinovichPhD}.)  Then the remaining cop occupies
  $b_i(n)$. If the robber stays in $C_i(n)$, the cops from $M(n)$
  capture him there (recall that $M(n)$ has the same size $n-s(n)$ as
  every $C_i(n)$, $i\in\{0,\ldots,t(n)-1\}$). So we can assume that
  the robber goes to $A^s(n) \cup (B^t(n)\setminus\{b_i(n)\}) \cup
  V(G_{n-s(n)-1}(s,t))$. 
  The cops from $D^s(n)$ move to $A^s(n)$ and
  force the robber to proceed to $G_{n-s(n)-1}(s,t)$. If the robber
  remains in $B^t(n)$, he is captured in the next move. From now on,
  the $s(n)+1$ cops in $A^s(n) \cup \{b_i(n)\}$ stay there until the end
  of the play and the robber cannot leave $G_{n-s(n)-1}(s,t)$, which has
  outgoing edges only to $B^t(n)$ and to $A^s(n)$. The remaining $n-s(n)$ cops play in
  $G_{n-s(n)-1}(s,t)$ in the same way as on $G_n(s,t)$ until the
  robber is captured or expelled to a $b_j(n)$ for $j\neq i$. There he
  will be captured in one move.

  A winning robber strategy against~$n$ cops is to stay in $N(n)$
  until all~$n$ cops are there and then to go to $C_0(n)$. Due to
  the monotonicity of the winning condition, in that
  position of the game, no cop can be removed from his vertex as every
  vertex of $N(n)$ is reachable from every vertex of $C_0(n)$.
\end{proof}

\section{Big DAG Decompositions}\label{sec:bigdec}

In this section we prove that the described winning strategy for~$n+1$
cops from Section~\ref{sec:constr}
(let
us call it~$\sigma$) is the only possible one up to some irrelevant
changes. Then we count the number of positions that are consistent
with~$\sigma$ and observe that there are super-polynomially many of
them. It will follow that every DAG decomposition of the optimal width
has a super-polynomial size.

The first kind of change is to occupy the vertices within the sets
$M(n)$, $C_i(n)$ ($i\in\{0,\ldots,t(n)-1\}$), $A^s(n)$ and $D^s(n)$ or
to remove cops from them in
a different order than according to~$\sigma$. The second kind of a change is to
place cops on and then to remove them from vertices that are already
unavailable for the robber. (Note that~$\sigma$ never lets cops stay
on such vertices.) Both changes can obviously only increase the number of
possible positions.

\begin{observation}\label{obs:irrelevant-changes}
Let $\sigma'$ be as $\sigma$, but with some irrelevant changes applied.
Then $\sigma'$ uses as many cops as $\sigma$ and there are at least as
many positions consistent with $\sigma'$ as with~$\sigma$.
\end{observation}

\begin{lemma}\label{lemma:unique-sigma}
Up to irrelevant changes, there is only one winning strategy for $n+1$
cops on $G_n(s,t)$.
\end{lemma}
\begin{proof}
  We describe a family $\Gamma$ of robber strategies that enforces the
  cops to play according to~$\sigma$ (the cops strategy from
  Lemma~\ref{lemma:decs_and_positions}) up to irrelevant changes. If
  $n+1$ cops play in a different way, they lose. By
  Observation~\ref{obs:irrelevant-changes} we can ignore those
  possible changes. The strategies in $\Gamma$ differ only in the
  choice of components $C_i(\ell)$ (for all levels $\ell\ge n_0$ and
  $i\in\{1,\ldots,t(\ell)\}$) the robber visits during a play. Every
  choice is made independently of any other cop or robber move. Thus
  $\Gamma$ can be described as a set of strategies $\rho(I)$
  parameterised by a sequence of choices
  $I = i_n,i_{n-s(n)-1},\ldots,i_{n_0}$ (the indexes are the levels in
  decreasing order) where each $i_{\ell}$ is in
  $\{1,\ldots,t(\ell)\}$.

  Let some $I$ be fixed.  The robber remains in $N(n)$ until it is completely
  occupied by cops. If a cop was placed on a vertex $v\notin N(n)$ before $N(n)$
  was completely occupied, the cops lose. Indeed, consider the position where
  all vertices of $N(n)$ are occupied for the first time. Because~$v$ (whatever
  it is) has been occupied and because it is still reachable now from $N(n)$,
  the last $(n+1)$-st cop is still on~$v$, otherwise the monotonicity is
  violated at~$v$. The robber goes to some $C_i(n)$ from which~$v$ is reachable
  via paths avoiding $N(n)$ (such a $C_i(n)$ always exists) and the cops have no
  legal move. Thus the first moves of the cops are to occupy $N(n)$ and the last
  cop remains outside of the graph. (An irrelevant change can be made here:
  $N(n)$ can also be occupied in many steps. However, this ends in the same
  position in that the whole $N(n)$ is occupied. The second kind of irrelevant
  changes cannot ber applied here.)

  The robber chooses some $C_{i_n}(n)$ and the cops have no other possible move
  than to place the last remaining cop on $b_{i_n}(n)$ (otherwise we have the
  situation discussed in the previous paragraph). The robber goes to
  $A^s(n)\cup B^t(n)\setminus \{b_{i_n}(n)\}$. In this position, the cops in
  $\{b_{i_n}(n)\} \cup M(n)$ cannot be removed. So the cops from $D^s(n)$ must
  be used and they can be placed either in $A^s(n)$ or in $G_{n-s(n)-1}(s,t)$,
  or in $B^t(n)\setminus \{b_{i_n}(n)\}$, or in some $C_i(n)$. Placing the cops
  in $C_i(n)$ belongs to the second kind of irrelevant changes. In this case the
  robber component does not change, so after some number of such moves the cops
  have to play in a different way (or they lose). If at least one cop is placed
  in $G_{n-s(n)-1}(s,t)$ or in $B^t(n)\setminus \{b_{i_n}(n)\}$, the robber
  remains in $A^s(n)$ until all cops are placed. Then the cops have no legal
  move and lose. It follows that the cops from $D^s(n)$ must occupy the whole
  $A^s(n)$ (as above, regardless in which order) and the robber goes to
  $G_{n-s(n)-1}(s,t)$. From now on, all cops occupying $A^s(n)$ and $b_{i_n}(n)$
  will be reachable from the robber component and must stay there. It follows by
  induction on~$n$ that~$\sigma$ is the unique winning strategy for $n+1$ cops
  up to irrelevant changes.
\end{proof}

\begin{theorem}\label{thm:super-poly}
Every DAG decomposition of $G_n(s,t)$ of width $n+1$ has
super-po\-ly\-no\-mial\-ly many bags.
\end{theorem}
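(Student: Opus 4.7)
The plan is to use the uniqueness of the winning strategy (sketched in the two paragraphs preceding the statement) to reduce the theorem to counting the positions consistent with the strategy $\sigma$ from Lemma~\ref{lemma:dagw}, and then to combine that count with Lemma~\ref{lemma:decs_and_positions}. The preceding discussion already rules out any essentially different strategy: the order-of-occupation changes only increase the number of positions, and dwelling on vertices unavailable to the robber is forbidden by the monotonicity condition. So it suffices to produce a super-polynomial lower bound on the number of positions consistent with $\sigma$ on $G_n(s,t)$.

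The key recursion is the following. Let $T(n)$ denote the number of positions consistent with $\sigma$ on $G_n(s,t)$. After $\sigma$ has occupied $N(n)$, the robber has $t(n)$ independent choices of component $C_i(n)$ of $G_n(s,t)-N(n)$ to enter. For each such $i$, $\sigma$ reacts deterministically by placing the $(n+1)$-st cop on $b_i(n)$, migrating the cops from $D^s(n)$ to $A^s(n)$, forcing the robber into $G_{n-s(n)-1}(s,t)$, and recursing there with the $n-s(n)$ free cops. Crucially, the cop on $b_i(n)$ remains pinned throughout the recursive subgame, so every cop placement that appears in the branch indexed by $i$ contains $b_i(n)$. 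Hence the position sets of the $t(n)$ branches are pairwise disjoint, and each contains at least $T(n-s(n)-1)$ consistent positions. This gives
\[
T(n) \;\geq\; t(n)\cdot T(n-s(n)-1).
\]

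Unrolling the recursion with $n_0=n$ and $n_{j+1}=n_j-s(n_j)-1$, stopped once $n_j<5$, yields $T(n)\ge\prod_{j} t(n_j)$. Under the standing hypotheses $s(n)<n/\log n$ and $t(n)\ge 2$, the number of iterations of this recursion grows faster than $\log n$ (for instance, $s\equiv t\equiv 2$ gives roughly $n/3$ levels and $T(n)\ge 2^{n/3}$, while the extreme choice $s(n)=\lfloor n/\log n\rfloor$ still yields $\Omega(\log^2 n)$ levels and $T(n)\ge 2^{\Omega(\log^2 n)}$). In either case $T(n)$ is super-polynomial in $n$, hence super-polynomial in $|V(G_n(s,t))|=O(n^2\cdot t(n))$. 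By Lemma~\ref{lemma:decs_and_positions}, a DAG decomposition of width $n+1$ and size $p$ induces a winning strategy with at most $p\cdot|V(G_n(s,t))|$ consistent positions, so $p\ge T(n)/|V(G_n(s,t))|$ is super-polynomial.

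The main obstacle is the disjointness step in the recursion, where one must argue that the $t(n)$ robber branches genuinely multiply rather than share positions. The pinned cop on $b_i(n)$ is the device that does this: it marks each branch of the play distinctly, so no cop configuration in branch $i$ can coincide with one in branch $j\neq i$. A secondary, purely arithmetic, point is verifying that the unrolled product is super-polynomial for the full range $s(n)<n/\log n$; this is a routine calculation but must be carried out uniformly in $s$ if the theorem is later used with nonconstant $s,t$, as is done in the subsequent \pspace-hardness reduction.
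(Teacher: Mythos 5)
Your counting argument and the final step via Lemma~\ref{lemma:decs_and_positions} match the paper's proof: the product $\prod_j t(n_j)$ obtained by unrolling your recursion is exactly the paper's count $t^{\#\ell}$ of positions reached when the robber sits in the lowest level, and the pinned cop on $b_i(\ell)$ is indeed what keeps the branches disjoint. (Your arithmetic at the two extremes of $s$ is, if anything, more careful than the paper's, which only claims at least $\log n$ levels.) But there is a genuine gap where you write that ``the preceding discussion already rules out any essentially different strategy.'' It does not. The paragraphs before the theorem only \emph{announce} that $\sigma$ is the unique winning strategy up to irrelevant changes and describe what those changes are; the proof of that uniqueness claim is the entire first half of the paper's proof of this very theorem. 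Concretely, one must exhibit a robber strategy that defeats every cop strategy deviating from $\sigma$: the robber stays in $N(n)$ until it is fully occupied, and if a cop was ever placed on some $v\notin N(n)$ beforehand, monotonicity pins that cop to $v$, the robber moves to a $C_i(n)$ from which $v$ is reachable avoiding $N(n)$, and the cops have no legal move; a similar argument forces the last cop onto $b_i(n)$ and forces the cops from $D^s(n)$ onto $A^s(n)$ and nowhere else. Without this enforcement argument your lower bound applies only to $\sigma$ itself, whereas Lemma~\ref{lemma:decs_and_positions} bounds the positions of the strategy induced by an \emph{arbitrary} width-$(n+1)$ decomposition; the contradiction follows only once every such strategy is shown to realise all the positions you count. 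You have thus misplaced the difficulty: the ``main obstacle'' is not the disjointness of the branches but this uniqueness claim.
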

\begin{proof}
We count the number of positions that are consistent
with~$\sigma$. When the robber goes to the level~$n_0-1$, the cops are
occupying $A^s(\ell)$ for all levels $\ell\ge n_0$ that appear as indices of
$G_\ell( s,t)$. Additionally, for each $\ell$, the cops occupy exactly
one of $\{b_0(\ell),\ldots,b_{t(\ell)}(\ell)\}$. (If they occupy more
of them, the remaining cops do not suffice to capture the robber due
to Lemma~\ref{lemma:dagw}.)  Thus every $I$ induces a new position and 
there are $\prod_{\ell \text{ is a
    level}}t(\ell)$ possible positions with the robber in 
level~$n_0-1$, each corresponding to a particular choice of $C_i(m)$ for
$m>l$. By Lemma~\ref{lemma:decs_and_positions}, the number of bags in
  an DAG decomposition of optimal width is at least $\prod_{\ell \text{ is a
    level}}t(\ell)$. We set $s(\ell) = t(\ell)= 2$ for all
$\ell\in\bbN$ and $n_0=5$. Then
$\prod_{\ell \text{ is a level}}t(\ell) \ge 2^{\lfloor n/2\rfloor -
  5}$. On the other hand the size of $G_n(s,t)$ is 
\begin{align*}
|V(G_n(s,t))|& = |N(n)| + t(n)\cdot |C_i(n)| + |A^s(n)| +
|B^t(n)| + |G_{n-s(n)-1}(s,t)|  \\
&= n + t(n)\cdot (n-s(n)) + s(n) + t(n) + |G_{n-s(n)-1}(s,t)|\\
&= O(n^2\cdot t(n)) = O(n^2)\,,
\end{align*}
 so $ 2^{\lfloor n/2\rfloor - 2}$ is super-polynomial in
$|G_n(2,2)|$. 
\end{proof}

\section{Consider an Additive Constant Error}

In the simplest case we can set $s(\ell)= t(\ell) = 2$ for all
levels. Then we obtain at least $\lfloor n/2 \rfloor$ levels and the
size of an optimal decomposition is at least
$2^{O(\sqrt{|G_n(s,t)|})}$.  However, at the cost of one additional
cop we can construct a DAG decomposition with polynomially many
bags. We change~$\sigma$ to occupy $A^s(n)$ with two cops instead of
placing one cop on $b_i(n)$. We need one extra cop for this, but this
is not repeated in each level. Already in the first level, when the
robber goes to $G_{n-s(n)-1}(s,t) = G_{n-3}(s,t)$, we have cops only
on $A^s(n)$, but not
on~$b_i(n)$. So one cop is saved for $G_{n-3}(s,t)$ and we can
continue to play in all levels in the same manner.

We can change our choice of~$s$ and $t$ to make the number of
additional cops needed to obtain a polynomial size decomposition
unbounded. Let $s(\ell) = t(\ell) = \lfloor \ell/\log \ell\rfloor$ for
all~$\ell$. Then there are at least $\log n$ levels and
\begin{align*}
  |G_n(s,t)| = &\sum_{i=0}^{t(n)-1}(|C_i(n)| + |\{b_i(n)\}|) + N(n) +
                                                     |A^s(n)|  + |G_{n-s(n)}(s,t)|\\
                      & = (n -s(n) + 1)\cdot t(n) + n + s(n) +
                                                     |G_{n-s(n)-1}(s,t)|\\
                      & = O\left(\frac{n^2}{\log n}\right) + |G_{n-s(n)-1}(s,t)|
                                  = O(n^2)\,.
\end{align*}
It remains to estimate the number of bags in an optimal
decomposition. Let $n_1,n_2,\ldots$ be the indices in $G_{n_i}(s,t)$
appearing in $G_n(s,t)$, \ie $n_0 = n$, and for $i>0$ we have $n_i =
n_{i-1} - \lfloor n_{i-1}/\log n_{i-1} -1\rfloor$. Then for $n\ge 5$, \[n_i \ge n -
i\cdot n/\log n - i\,,\] which is easy to prove by induction
on~$i$. For all $i\le \log n/2$ we have 
\[n_i \ge n - \frac{\log n}{2}\cdot \frac{n}{\log n} -\frac{\log n}{2}
= \frac{n-\log n}{2} \] 
and thus for $n\ge 5$,
 \[t(n_i) =
\left\lfloor\frac{n_i}{\log n_i}\right\rfloor \ge \frac{(n-\log n)}{2\log n} \ge
\frac{n-n/2}{2\log n} \ge \frac{n}{4\log n}\,.\] So for $\lfloor\log
n/2\rfloor$ many levels $t(n_i) \ge n/(4\log n)$ and thus the number
of bags in any DAG decomposition is at least $(\frac{n}{4\log
  n})^{\lfloor\log n/2\rfloor}$.

We can define a winning cop strategy with only
polynomially many positions with the same trick as before investing
$s(n)-1$ new cops, \ie using $n+s(n)$ cops. Occupy $N(n)$ and when the
robber goes to some $C_i(n)$, occupy $A^s(n)$. The robber has to go to
the lower levels (otherwise he will be captured in
$C_i(n)\cup\{b_i(n)\}$ or in $B^t(n)$) and we do not need cops in
$B^t(n)$. In the following theorem we show that less than $n+s(n)$ cops
do not have a winning strategy with polynomially many positions. Thus
there is no polynomial approximation of an optimal DAG decomposition
by an additive constant.

\begin{theorem}
  For all $G_n(s,t)$ with $n\ge 25$, every DAG decomposition of width at most $n-s(n)-1$
  has size at least $\left(\frac{\log n}{16}\right)^{\log n/4}$.
\end{theorem}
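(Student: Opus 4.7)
The plan is to follow the pattern of \Cref{thm:super-poly}: exhibit a robber strategy against the stipulated cop budget in the DAG-width game that forces any winning cop strategy to traverse super-polynomially many distinct cop placements, and then invoke \Cref{lemma:decs_and_positions}. Since $|G_n(s,t)| = O(n^2)$, a super-polynomial lower bound on the number of positions consistent with any winning strategy translates, after absorbing the polynomial factor into the exponent constants, into the claimed size bound on decompositions.

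At each recursive level $n_i$, the robber plays as in \Cref{thm:super-poly}: he waits for the cops to occupy $N(n_i)$ entirely (otherwise a monotonicity-violating robber win follows exactly as in \Cref{thm:super-poly}) and then retreats into some component $C_{j_i}(n_i)$. To preserve monotonicity and eventually capture, the cops must respond by either (a) placing one cop on $b_{j_i}(n_i)$ -- a \emph{branch}, yielding $t(n_i)$ distinct possible cop placements, one per choice of $j_i$ -- or (b) occupying all of $A^s(n_i)$ with $s(n_i)$ cops -- a \emph{trick}, which saves the branch at the cost of pinning $s(n_i)$ cops at this level for the remainder of the play, and which requires $n_i + s(n_i)$ cops at the current level (plus all cops staying from higher levels).

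The resource invariant to track is the excess budget $e_i$ at level $n_i$, defined as the cops available there (the total budget minus all staying cops from higher levels) minus the basic requirement $n_i+1$. A direct computation, analogous to the accounting sketched in the surrounding text, gives the dynamics: a branch at level $n_i$ preserves $e_{i+1} = e_i$, while a trick requires $e_i \ge s(n_i) - 1$ and yields $e_{i+1} = e_i + 1$. The central combinatorial claim is that under the given cop budget, at each of the top $\lfloor \log n / 4 \rfloor$ levels the excess $e_i$ is strictly less than $s(n_i) - 1$, so the cops cannot trick and are forced to branch. Each such branch multiplies the total number of distinct cop placements by $t(n_i)$, because the cop on $b_{j_i}(n_i)$ persists as a staying cop through every subsequent level and therefore distinguishes all downstream placements from one another. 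Using the estimate $n_i \ge (n-\log n)/2$ for $i \le \log n / 2$ recorded in the discussion preceding this theorem, we obtain $t(n_i) = \lfloor n_i / \log n_i \rfloor \ge \log n / 16$ for all these top levels when $n \ge 25$. Multiplying across the top $\lfloor \log n / 4 \rfloor$ levels then yields the desired $(\log n / 16)^{\log n / 4}$ bound.

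The main obstacle is proving the forced-branching claim rigorously. Since $s(n_i)$ decreases with $i$ while a successful trick raises $e_i$ by one, in principle the cops might use early tricks to accumulate enough excess to enable later ones. The resource analysis must therefore rule out such a bootstrap during the top $\lfloor \log n / 4 \rfloor$ levels: $s(n_i)$ must be shown to stay large enough, relative to any achievable $e_i$, that the prerequisite $e_i \ge s(n_i) - 1$ fails at every one of these levels. Formalizing this with the specific form $s(\ell) = t(\ell) = \lfloor \ell / \log \ell \rfloor$ and the starting excess induced by the hypothesized width bound is the heart of the proof; the position-multiplication argument in the preceding paragraph then combines the forced branches into the claimed size lower bound.
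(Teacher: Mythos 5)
Your high-level plan (a robber strategy forcing many distinct cop placements, then Lemma~\ref{lemma:decs_and_positions}) matches the paper's, but the combinatorial core is both missing and, as formulated, incorrect. First, your branch-or-trick dichotomy is not exhaustive: a third cop option is to place \emph{several} cops into $B^t(\ell)$ at one level, so that whole groups of robber choices $C_j(\ell)$ lead to the same cop placement; this cuts the per-level factor from $t(\ell)$ down to $t(\ell)$ divided by the number of cops parked in $B^t(\ell)$, at the price of tying those cops for the rest of the play. The paper's proof is built precisely around this case: a pigeonhole over the roughly $s(n)$ extra cops shows that on at least $\log n/4$ of the $\log n/2$ levels with $t(n_i)\ge n/(4\log n)$ at most $4n/\log^2 n$ cops can ever sit in $B^t(\ell)$, and the factor $\log n/16$ arises as the quotient $t(\ell)/(4n/\log^2 n)$ --- not, as in your write-up, as a weakening of $t(n_i)$ itself. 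Your claim that each forced branch contributes a full factor of $t(n_i)$ is therefore unjustified against this cop behaviour, and the numerical coincidence with the paper's bound hides that the two occurrences of $\log n/16$ have entirely different origins.

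Second, the claim you yourself flag as ``the heart of the proof'' --- that $e_i<s(n_i)-1$ throughout the top $\log n/4$ levels, so tricks are impossible there --- already fails at the second level, and not because of the bootstrap you worry about. With $s(\ell)=\lfloor\ell/\log\ell\rfloor$ one gets $s(n_1)\le s(n)-1$ for $n_1=n-s(n)-1$ and $n\ge 25$ (indeed $s(n)-s(n_1)=\Theta(n/\log^2 n)$ for large $n$), while a branch at the top level leaves $e_1=e_0=s(n)-2\ge s(n_1)-1$. So under your own dynamics the cops could branch once at the top and trick at every subsequent level, yielding only polynomially many placements; the forced-branching claim cannot be ``formalized'' because in this form it is false. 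Repairing this requires a genuinely different accounting --- the paper abandons the excess-budget invariant entirely and instead bounds, level by level, how many cops can enter $B^t(\ell)$ --- and that argument is absent from your proposal.
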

\begin{proof}
  We describe a robber strategy against $n-s(n)-1$ cops that allows
  him to enforce at least $(\frac{n}{4\log n})^{\lfloor\log
    n/2\rfloor}$ positions (dependent on his choices of
  $C_i(\ell)$). The robber waits in $N(n)$ until it is occupied
  by~$n$ cops and goes to some $C_i(n)$ for $i\ge 0$ such that $b_i(n)$
  is not occupied by the cops. As $|C_i(n)| = n-s(n) \ge s(n)$ (recall that
  $n\ge 25$) and only $s(n)-1$ cops are left, in all
  $C_i(n)$ there is a cop free vertex. Similarly, the remaining~$s(n)-1$
  cops cannot occupy all $b_i(n)$ (there are $t(n) = s(n)$ many of them), so going to such a $C_i(n)$ is
  possible. Now the cops in $N(n)$ cannot move, $s(n)-1$ free cops
  cannot expel the robber from $C_i(n)$ and the robber waits in
  $C_i(n)$ for $b_i(n)$ to be occupied. When the cops announce to do
  this, he runs via $b_i(n)$ and $A^s(n)$ (which also has a free
  vertex) to $G_{n-s(n)-1}(s,t)$ and plays there in the same
  way recursively.

  If the cops do not occupy all vertices in $A^s(n)$ when the robber
  is in the subgraph $G_{n-s(n)-1}(s,t)$, they cannot use the cops
  from $M(n)$, so they cannot expel the robber from $M(n-s(n)-1)$
  (\ie $M$ in the highest but one level). Indeed, $|M(n-s(n)-1)| =
  (n-s(n)-1) - \left\lfloor \frac{n}{\log(n-s(n)-1)} \right\rfloor$
  and there are at most $(n+s(n)-1) - (n-s(n)) - 1 = 2s(n)-2$ free
  cops (namely $n-s(n)$ cops are in $M(n)$ and~$1$ cop is on
  $b_i(n)$), so we only need to choose an appropriately large~$n$, the
  least possible being~$25$. Hence we can assume that the
  cops occupy all $A^s(n)$, \ie $s(n)+1$ cops are tied in level~$n$ and
  there are at most $(n+s(n)-1) - s(n) - 1 = n-2$ cops for
  $G_{n-s(n)-1}(s,t)$.

  Let us count the number of possible positions that may appear in a
  play consistent with the described strategy of the robber. We first
  count the number of levels~$\ell$ where the cops have more than one
  cop in $B^t(\ell)$. When playing according to~$\sigma$, which uses
  $n+1$ cops, exactly one cop is in each $B^t(\ell)$ and now we have
  $s(n)+2$ cops more, so there are at least $\log n/2$ levels $i$ with
  $t(n_i)\ge n/(4\log n)$. Then there are $\log n/4$ levels $\ell$ with at
  most $4n\log^2 n$ cops in each of them. In order to cover
  $B^t(\ell)$ in each such level with $4n\log^2 n$ cops, we need
  $\frac{t(\ell)}{4n/\log^2 n}$ times. As $t(\ell)\ge n/(4\log n)$, we
  obtain $\frac{t(\ell)}{4n/\log^2 n} \ge \frac{\log n}{16}$. Summing
  up, there are $\log n / 4$ levels where the cops have to choose one
  among at least $\log n/16$ placements depending of the robber's
  choice of the corresponding $C_i(\ell)$. Thus the size of any DAG
  decomposition of width at most $n+s(n)-1$ is at least $(\log n/16
  )^{\log n/4}$. Recall that the size of $G_n(s,t)$ is polynomial in
  $n$ for $t(n)\le n$.

\end{proof}

\begin{corollary}
There is no polynomial size approximation of an optimal DAG
decomposition of $G_n(s,t) $ with an additive constant error. 
\end{corollary}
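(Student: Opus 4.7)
The plan is to derive the corollary as an almost immediate consequence of the preceding theorem by exploiting that the gap between the true DAG-width and the threshold below which decomposition size blows up can be made arbitrarily large. I would take the family $G_n(s,t)$ with the concrete choice $s(\ell) = t(\ell) = \lfloor \ell/\log \ell\rfloor$ used in the discussion leading up to that theorem; recall that $|G_n(s,t)| = O(n^2)$ and, by \Cref{lemma:dagw}, $\DAGW(G_n(s,t)) = n+1$, so ``polynomial in the graph size'' coincides with ``polynomial in $n$''.

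The argument itself proceeds by contradiction. Assume that for some constant $c \ge 0$ every graph $G$ admits a DAG decomposition of width at most $\DAGW(G)+c$ whose size is polynomial in $|G|$. Since $s(n) = \lfloor n/\log n\rfloor \to \infty$, I pick $n \ge 25$ large enough that $s(n) \ge c+2$. Applied to $G_n(s,t)$, the hypothetical approximation then produces a DAG decomposition of width at most $n+1+c \le n+s(n)-1$ and of size polynomial in $|G_n(s,t)| = O(n^2)$, hence polynomial in $n$. However, the preceding theorem asserts that every DAG decomposition of $G_n(s,t)$ of width at most $n+s(n)-1$ has size at least $\bigl(\tfrac{\log n}{16}\bigr)^{\log n/4}$, which is super-polynomial in $n$. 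This yields the desired contradiction.

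The entire technical substance --- that width just below $n+s(n)$ already forces super-polynomial size --- is carried by the preceding theorem; the only additional ingredient required for the corollary is the observation that, because $s$ grows without bound, any fixed additive slack $c$ is eventually dwarfed by $s(n)$. Consequently I do not anticipate any genuine obstacle beyond the routine bookkeeping of choosing $n$ sufficiently large to absorb the constant $c$ and to satisfy the hypothesis $n \ge 25$ of the preceding theorem.
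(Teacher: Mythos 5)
Your proof is correct and follows essentially the same route the paper intends: the corollary is a direct consequence of the preceding theorem (whose statement contains a typo --- the width bound should read $n+s(n)-1$, as in its own proof and the surrounding discussion, and you correctly use that version) together with the facts that $\DAGW(G_n(s,t))=n+1$, $|G_n(s,t)|=O(n^2)$, and $s(n)\to\infty$ so that any fixed additive slack $c$ is eventually absorbed by $s(n)$. No substantive difference from the paper's argument.
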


\section{\dagw is \pspace-complete}
\label{sec:pspace}

The construction of $G_n(s,t)$ shows how the robber can save the history of the
play in the current position. In each level $\ell$ he chooses one of the
$C_i(\ell)$, which is stored as a cop occupying $b_i(\ell)$ until the end of the
play. In the first step we extend the construction to reduce the \taut problem
to \DAGWprob (the problem, given a graph~$G$ and a number~$k$, is
$\DAGW(G)\le k$?).  \taut is the problem, given a formula of propositional
logic, to decide whether it is satisfied by all variable interpretations. In
general, \taut is \conp-hard~\cite[Problem {[}LO8{]}]{GareyJoh79}, but in our
version the formula is given in CNF. This is a restriction, as CNF-\taut is in
\ptime, but we use it only as a part of our construction which proves that
\DAGWprob is \pspace-hard. Later we extend it to reduce QBF (which is
\pspace-complete) to \DAGWprob where CNF is the general case~\cite[Theorem
7.10]{GareyJoh79} and is more convenient for our purposes. For a technical
reason we also restrict the formulae by forbidding a variable to appear twice in
a clause.

Let $\phi$ be a formula with~$n$ variables and $m$ clauses. The graphs $H_\phi$
are based on the graphs $G_{n_0+n-1}(s,t)$ from Section~\ref{sec:constr} such that
\begin{itemize}
\item (obviously) the number $n$ of non-trivial levels in $G_{n_0+n-1}(s,t)$ is
  the number of variables in~$\phi$,

\item $s(\ell) = t(\ell) = 2$ for all levels~$\ell$.
\end{itemize}

It will be convenient to define functions that relate variables and levels.
Let the variables of~$\phi$ be enumerated as $X_1$, $X_2$, \ldots, $X_n$. Let 
$\lev$ be a set of level numbers. Let $\var$ be the function mapping a level to an 
index of a variable such that $\var(\ell) = i$ if and only if $\lev(i) = \ell$.

Consider $G_{n_0+n-1}(s,t)$ as a starting point. We replace $G_{n_0-1}(s,t)$
(which is a single vertex) by the following gadget~$F_\phi$. It has a vertex $v$
and for every clause $C=L_1 \lor L_2 \lor{} \ldots {}\lor L_{r(C)}$ an $n$-clique
$K_C$ with vertices $v_1^C$, $v_2^C$, \ldots, $v^C_{r(C)}$. The edges go from
$v$ to every vertex of $K_C$ and back, \ie we have edges $(v,v_i^C)$ and
$(v_i^C,v)$ for all clauses $C$ and all $i\in\{1,\ldots,{r(C)}\}$. 

From outside of $F_\phi$, all vertices that had outgoing edges to the vertex of
$G_{n_0-1}(s,t)$ now have outgoing edges to all vertices of $F_\phi$. So those
edges build the set $A^s(\ell) \times F_\phi \cup N(\ell)\times F_\phi$ for
every level~$\ell\ge n_0$. Additionally we have edges from every vertex of
$F_\phi$ to every vertex of every $A^s(\ell)$, \ie the set of edges
$V(F_\phi)\times A^s(\ell)$ for every level~$\ell$. 

Finally, the edges from $K_C$ leaving $F_\phi$ reflect the
clause~$C$. For all levels~$\ell$ we add the following edges.
\begin{enumerate}
\item If $X_{\var(\ell)}$ does not appear in $C$, then there are edges
  $(v_j^C,b_0(\ell))$ and $(v_j^C,b_1(\ell))$ for all $v_j^C$ from $K_C$. 
\item If $X_{\var(\ell)} = L_i$ for some~$i\in\{1,\ldots,r(C)\}$ (\ie $X_{\var(\ell)}$ appears in $C$
  positively), then there is an edge $(v_{\var{\ell}}^C,b_1(\ell))$.
\item If $\neg X_{\var(\ell)} = L_i$ (\ie $X_{\var(\ell)}$ appears in $C$ negatively), then there is
  an edge $(v_{\var(\ell)}^C,b_0(\ell))$.
\end{enumerate}
\begin{observation}\label{obs:two-or-one}
By assumption, a variable can appear in a clause at most
once, so either it does not appear there at all, and we have two edges
from every vertex in the clique to the corresponding $B^t(\ell)$, or it appears once
and we have exactly one edge from the whole clique to $B^t(\ell)$.
\end{observation}

We claim that $n+1$ cops capture the robber in $H_\phi$ if and only if
$\phi$ is a tautology. 
\begin{lemma}\label{lemma:in-F-phi}
  \begin{enumerate}
  \item If $n+1$ cops win in $H_\phi$, then they must play according
    to the strategy~$\sigma$ as in the proof of Lemma~\ref{lemma:dagw} until
    a first position $P$ in that the robber component is $F_\phi$.
  \item The cops can play according to~$\sigma$ until position $P$
    (also if the robber wins).
  \item In position $P$ the only possible move for the cops is to occupy $v$
with the only free cop.
  \end{enumerate}
\end{lemma}
\begin{proof}
  The first two claims were proven in Lemma~\ref{lemma:dagw} and in
  Lemma~\ref{lemma:unique-sigma}. For the third claim, the only free
  cop is the one that would capture the robber on
$G_{n_0-1}(s,t)$ if we played on $G_{n_0+n-1}(s,t)$.
\end{proof}

The robber
chooses a component $C$. Let the new position be $P'$.
\begin{lemma}\label{lemma:use_one_cop}
  In position $P'$, $n+1$ cops win if and only if there is a
  cop on a vertex $w\in B^t(\ell)$ for some level $\ell$ such that
  \begin{enumerate}
  \item $w$ is not reachable from $C$ (via cop free
    paths) and 
  \item $X_{\var(\ell)}$ appears in $C$.
  \end{enumerate}
\end{lemma}
\begin{proof}
  Let $P'$ be some position as described above. If the cops
  win, in~$P'$ a cop is free  and
  the cops occupy all $A^s(\ell)$ (in all levels) and, in each level
  $\ell$, one of the two vertices $b_0(\ell)$ and $b_1(\ell)$. The
  free cop is placed on $v$ in $F_\phi$ (there is no other legal move
  that cops occupy a vertex in $F_\phi$ which does not lead to an immediate loss for the cops)
  and the robber chooses some $K_C$ for a clause
  $C=L_1 \lor L_2 \lor {}\ldots {}\lor L_r$. Let $X_{j_i}$ be the variable in
  the literal $L_i$. Now for all levels $\ell$ no cops from any
  $A^s(\ell)$ can be removed and reused as this would violate the
  monotonicity: there are edges from all vertices of $K_C$ to all
  vertices of all $A^s(\ell)$. The cops from every level $\ell$ such
  that $X_{\var (\ell)}$ does not appear in $C$ cannot be removed for
  the same reason. Thus only cops from $\bigcup_\ell B^t(\ell)$ can
  potentially be reused. If no cop from any $B^t(\ell)$ can be reused,
  the robber wins, otherwise the cops win as follows. Let this cop be
  in level $\ell$ and assume without loss of generality that
  $\var (\ell) = j_1$. The cop is placed on $v_2^C$, then the cop from
  $B^t(\lev(j_2))$ is placed on $v_3^C$. Now the cop from
  $B^t(\lev(j_3))$ is placed on $v_4^C$ and so on, the last cop
  occupying $v_1^C$.

  To see that $X_{\var(\ell)}$ must appear in $C$, note that otherwise the cop
  from $B^t(\ell)$ cannot be removed due to the monotonicity condition.
\end{proof}

\begin{lemma}\label{lemma:tautology}
  $n+1$ cops win on $H_\phi$ if and only if $\phi$ is a tautology.
\end{lemma} 
\begin{proof}
  Assume that $\phi$ is true under every valuation. The cops have the following
  winning strategy. Until the robber component is $F_\phi$ they play as
  described in Lemma~\ref{lemma:dagw}. Then by Lemma~\ref{lemma:in-F-phi} they
  have a free cop that is placed no $v\in V(F_\phi)$. The robber chooses some
  clique $K_C$ for a clause $C$.

  Let, without loss of generality, $R(C)\subseteq \{1,\ldots,r(C)\}$ be the set
  of indices~$\var(\ell)$ of $X_{\var(\ell)}$ appearing in $C$. Let
  $\alpha\colon \{X_i \mid i\in R\} \to \{0,1\}$ be the valuation of those
  $X_{\var(\ell)}$ defined by the choices of the robber during the previous part
  of the play as follows. For all levels~$\ell$, if $X_{\var(\ell)}$ appears in
  $C$ and the robber chose the component $C_i(\ell)$ in level $\ell$ (for
  $i\in\{0,1\}$), then $\alpha(X_{\var(\ell)}) = 1$ if $i=0$ and
  $\alpha(X_{\var(\ell)}) = 0$ if $i=1$. Let $\beta$ be a valuation of
  $X_1,\ldots,X_n$ extending $\alpha$. By assumption $\beta \models \phi$, so
  $\beta\models C$ and thus there is some $L_j$ in $C$ with $\beta \models L_j$.
  Let $\ell$ be such that $X_{\var(\ell)} = L_j$ or $\neg X_{\var(\ell)} = L_j$
  (we have $\var(\ell)\in R$). If $X_{\var(\ell)} = L_j$, then
  $\beta(X_{\var(\ell)}) = \alpha(X_{\var(\ell)}) = 1$ and thus the robber chose
  $C_0(\ell)$ in level~$\ell$. Then a cop occupies $b_0(\ell)$ in the current
  position. As $X_{\var(\ell)}$ occurs in $C$ positively, there is an edge from
  $K_C$ to $b_1(\ell)$ and by Observation~\ref{obs:two-or-one}, there is no edge
  from $K_C$ to $b_0(\ell)$. The cop from $b_0(\ell)$ can be reused and the cops
  win by Lemma~\ref{lemma:in-F-phi}. If $\neg X_{\var(\ell)} = L_j$, the situation is symmetric.

  If $\phi$ is not a tautology, let $\beta$ be a valuation with
  $\beta \not\models \phi$. The robber winning strategy is to choose in level
  $\ell$ the component $C_1(\ell)$ if $\beta(X_{\var(\ell)}) = 0$ and the
  component $C_0(\ell)$ otherwise. When the cops occupy $v$, the robber chooses
  the clique $K_C$ corresponding a clause $C = L_1 \lor{} \ldots {}\lor L_{r(C)}$ with
  $\beta \not \models C$. Then $\beta \not \models L_j$ for all
  $j\in\{1,\ldots,r\}$. Let $X_{i_j}$ be the variable in $L_j$. If
  $X_{i_j} = L_j$, then there is an edge from $K_C$ to $b_1(\lev(X_{i_j}))$. As
  $\beta\not\models X_{i_j}$, the robber chose $C_1(\lev(X_{i_j}))$, so there is
  a cop on $b_1(\lev(X_{i_j}))$, which cannot be removed. By
  Observation~\ref{obs:two-or-one} there is no other cop in
  $B^t(\lev(X_{i_j}))$. 

  By Lemma~\ref{lemma:use_one_cop} the cops from levels corresponding to
  variables that do not occur in $C$ cannot be reused. Thus all cops in all
  $B^t(\ell)$ is still reachable from the robber component and the cops lose.
\end{proof}

We extend our construction again to model choices of the cops that are still
recognisable at the end of the play. This leads to a reduction from \QBF, which
is \pspace-complete, to \DAGWprob. A \emph{quantified boolean formula} $\phi$ is
of the form \[\phi = Q_1X_1\ldots Q_rX_r \psi(X_1,\ldots,X_r)\] where $Q_i$ is
either $\forall$ or $\exists$ and $\psi$ is a propositional formula in CNF with
variables from $\calX =\{X_1,\ldots, X_r\}$.

The semantics of $\phi$ can be defined by means of a two-player game
with perfect information, which is convenient for our reduction. It is
the model-checking game $\MCGAME(\phi)$ for $\phi$ on the fixed structure
$(\{0,1\},\0)$ with no relations. The
players are called $\forall$ (the universal player) and $\exists$ (the
existential player). A play is played as follows. First, the
quantifier prefix of the formula is read from left to right and, for
$i=1,2,\ldots,r$, player $Q_i\in\{\forall, \exists\}$ chooses a value
$\beta(X_i) \in\{0,1\}$ for $X_i$. In other words, we have
positions~$P_j$ of the form
\[P_j = Q_jX_j,\ldots,Q_rX_r
\psi(X_1/\beta(X_1),\ldots,X_{j-1}/\beta(X_{j-1}),X_j,\ldots,X_r)\]
where $X_\ell/\beta(X_\ell)$ means that we replace all occurrences of
$X_\ell$ in $\psi$ by $\beta(X_\ell)$. If $Q_j = \forall$, then~$P_j$
is a position of the universal player, otherwise $P_j$ belongs to the
existential player. Successor positions have the form 
\[Q_{j+1}X_{j+1},\ldots,Q_rX_r
\psi(X_1/\beta(X_1),\ldots,X_j/\beta(X_j),X_{j+1},\ldots,X_r)\,.\]

The remaining positions of the game are of the form $(\theta,\beta)$
where $\theta$ is a subformula of~$\psi$ and $\beta$ is the valuation
of the variables as chosen in the first part of the play. The second part
starts in position $(\psi,\beta)$.  If $\theta = \theta_1 \lor \theta_2$,
the existential player moves to $(\theta_1,\beta)$ or to $(\theta_2,\beta)$
and if $\theta = \theta_1 \land \theta_2$, then the universal player moves
to $(\theta_1,\beta)$ or to $(\theta_2,\beta)$. In positions
$(X_i,\beta)$, the existential player wins if $\beta(X_i) = 1$ and
loses of $\beta(X_i) = 0$. In positions $(\neg X_i,\beta)$, the
universal player wins if $\beta(X_i) = 1$ and loses if $\beta(X_i) =
0$. The formula $\phi$ is true if and only of the existential player has a
winning strategy in the game.

It is very well known that deciding whether a given quantified formula
is true is \pspace-complete. 

The rest of the section is devoted to proof of the following theorem.

\begin{theorem}\label{thm:dagw-pspace-c}
 \DAGWprob is \pspace-complete.
\end{theorem}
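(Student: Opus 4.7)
My plan is to establish the two directions separately: membership in \pspace and \pspace-hardness, the latter by extending the \taut-reduction of Section~\ref{sec:copnhard} to a reduction from \QBF.

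For \pspace membership, the idea is that the \dagw game on $G$ with $k$ cops has plays of polynomial length because of the monotonicity winning condition: the robber's reachable region can only shrink monotonically, so only polynomially many distinct positions can appear in a play. Each position $(C,R)$ is describable in polynomial space. Hence the game is determined and can be solved by alternating polynomial time, yielding \pspace.

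For \pspace-hardness, I reduce \QBF to \DAGWprob by extending $H_\phi$ from Section~\ref{sec:copnhard}. In $H_\phi$, the robber's choice of component $C_i(\ell)$ at each level encodes a universal choice of a variable's value, appropriate for \taut but not for \QBF. Given $\phi = Q_1X_1\dots Q_rX_r \psi$, I would order levels so that outer quantifiers correspond to earlier-processed (higher) levels. For each universal variable $X_j$, the corresponding level $\ell(j)$ uses the unchanged gadget, letting the robber pick. For each existential variable, I introduce a role-swapped level in which the cops must commit to $\beta(X_j)$ before the robber commits: insert ``selector'' choke-point vertices $e_0,e_1$ between $N(\ell)$ and the $C_i(\ell)$ components, with edges ensuring that covering $e_i$ cuts off access to $C_{1-i}(\ell)$ and forces the robber into $C_i(\ell)$. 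Both selector placements are symmetric and winnable, but only the $b_i(\ell)$ matching the cop's selector choice stays cop-occupied when the recursion descends, so the choice is encoded in the final cop configuration on $B^t(\ell)$. The matrix gadget $F_\phi$ is reused verbatim: the robber picks a clause (matching~$\wedge$), the cops pick a true literal (matching~$\vee$), directly mirroring~$\MCGAME(\phi)$.

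Correctness then follows the pattern of Section~\ref{sec:copnhard}. A variant of Theorem~\ref{thm:super-poly} shows that any winning cop strategy is, up to irrelevant changes, the canonical one that plays universal levels reactively and existential levels via the selector commitment; and the matrix-phase analysis (Lemma~\ref{lemma:use_one_cop} and the surrounding discussion) shows that the cops can reuse a $b$-cop iff the valuation $\beta$ induced by the combined prefix choices makes the robber-selected clause satisfied. Inductively on the quantifier prefix, this is precisely the condition that the existential player wins $\MCGAME(\phi)$, i.e.\@ $\phi$ is true. Since $H'_\phi$ has polynomial size and is constructible in polynomial time, this yields the required reduction.

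I expect the main obstacle to be designing the existential selector gadget so that (i) \emph{both} cop commitments are genuinely winnable and symmetric, so the cops do have a free binary choice; (ii) after commitment the robber is uniquely forced into $C_{1-i}(\ell)$; (iii) no monotonicity-violating moves are enabled by the new edges; and (iv) the cop budget stays uniform across universal and existential levels so that a single threshold $k$ captures truth of $\phi$. Verifying that the ``irrelevant-changes'' uniqueness argument of Theorem~\ref{thm:super-poly} carries over through the modified levels, and that the inductive correspondence with $\MCGAME(\phi)$ respects the prefix order, is where the bulk of the technical work will lie.
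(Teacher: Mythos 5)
Your plan matches the paper's proof in both directions: membership via polynomial-length plays (restricting to cop strategies that strictly shrink the robber space, giving an \aptime{} and hence \pspace{} algorithm), and hardness via a \QBF{} reduction that keeps the universal levels from Section~\ref{sec:copnhard} unchanged and, for existential levels, inserts exactly the choke-point vertices you call $e_0,e_1$ --- the paper's $c_0(\ell),c_1(\ell)$ subdividing the edges from $N(\ell)$ to $C_i(\ell)$ --- so that the cops' commitment is recorded in which $b_i(\ell)$ is occupied, with the clause gadget $F_\psi$ reused verbatim and correctness argued by simulating $\MCGAME(\phi)$. The technical obstacles you flag are precisely what the paper's Lemmas~\ref{lemma:cops_invariant} and~\ref{lemma:robber_invariant} resolve, so the approach is essentially identical.
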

The easier part is to show that \DAGWprob is in \pspace. It suffices to prove
that any play in the cops and robber game has polynomial length. Then deciding
the winner of the game is in \aptime (alternating \ptime) and thus in
\pspace. If~$k$ cops have a winning strategy on a graph, they also have a
winning strategy that always prescribes to place cops in a way that the space
available for the robber shrinks by at least one vertex. We consider a version
of the game where the cops \emph{have} to play in this manner. Then they win if
and only if they win in at most~$2n$ moves where~$n$ is the number of vertices
of the graph (the robber can also make $n$ moves). Thus any play lasts at most $2n$ steps.

For the hardness we reduce \QBF to \DAGWprob. Let \[\phi = Q_1
X_1\ Q_2 X_2\ldots Q_r X_r\ \psi(X_1,\ldots,X_r)\] be a quantified
boolean formula. Our construction of the graph $S_\phi$ extends the
construction of $H_\phi$. For every universal
quantifier we add a new level as in~$H_\phi$. For each existential quantifier we add a
level that is depicted in Figure~\ref{fig:existential}. The only
difference to a universal level is that we replace edges from
$N(\ell)$ to $C_i(\ell)$ by paths of length two which share the middle
vertex $c_i$. We now give a formal
description of the reduction.

\begin{figure}
  \centering
  \begin{tikzpicture}[scale=0.9]
    \node[draw,circle,inner sep=5mm] (C_0) at (0,0){$M$};
    \node[gray] at (0,-0.4){$n-2$};

    \node[draw,circle,minimum height=5mm,minimum width=10mm] (D) at 
(2.5,0){$D$};
  \draw (C_0) -- (D);

    \node[draw,circle,inner sep=5mm] (C_1) at (3,-3){$C_0$};
    \node[gray] at (3,-3.4){$n-2$};
    \node[draw,circle,inner sep=5mm] (C_t) at (6,-3){$C_1$};
    \node[gray] at (6,-3.4){$n-2$};
\node[vertex] (c_1) at (2,-1.5){};
\node at (1.8,-1.8){$c_0$};
\node[vertex] (c_t) at (5,-1.5){};
\node at (4.8,-1.8){$c_1$};

\draw[-slim] (C_0) to (c_1);
\draw[-slim] (c_1) to (C_1);
\draw[-slim] (D) to (c_1);

\draw[-slim] (C_0) to (c_t);
\draw[-slim] (c_t) to (C_t);
\draw[-slim] (D) to (c_t);

   \draw[-slim,bend right] (C_1) to (D);
   \draw[-slim,bend right] (C_t) to (D);

    \node[draw,circle,minimum height=5mm,minimum width=10mm] (A) at 
(-3,-2){$A$};
    \draw[-slim] (A) to (C_0);

  \node[vertex] (b_1) at (-4,-5){};
  \node[vertex] (b_t) at (-2,-5){};
    \node[draw,ellipse,minimum height=5mm,minimum width=30mm] (B) at
    (-3,-5){};
  \node at (-4.9,-5.5){$B = \{b_0,b_1\}$};
  \draw (B) to (A);
  \draw[-slim,bend left=40] (C_1) to (b_1);
  \draw[-slim,bend left=40] (C_t) to (b_t);

   \node[draw,rectangle,minimum width=20mm,minimum height=15mm]
   (G_n-3) at (0,-3){$S^{j+1}_\phi$};
   \draw[-slim] (C_0) to (G_n-3);
   \draw[-slim,bend right] (D) to (G_n-3);
   \draw            (A) to (G_n-3);
   \draw[-slim] (G_n-3) to (B);

  \end{tikzpicture}
  \caption{The existential level of $S_\phi$. Indices $\cdot(n)$ are omitted.}
  \label{fig:existential}
\end{figure}

If $\phi$ has no variables, then if $\phi$ is true, $S_\phi$ is a single 
vertex, and if $\phi$ is false, $S_\phi$ is a 2-clique. (So one cop wins if and 
only if $\phi$ is true.) Otherwise we start the construction of $S_\phi$ with 
$F_\psi$ and for $j=r,r-1,\ldots,1$ we construct graphs $S_\phi^j$ 
such that $S_\phi^1 = S_\phi$. Assume that $S_\phi^{j+1}$ is already 
constructed, then $S_\phi^j$ is the following graph. There are two cases. If 
$Q_j = \exists$, then the vertex set is 
\[V(S_\phi^j) = V_\exists(j) = V(S_\phi^{j+1}) \cupdot A(j) \cupdot B(j) \cupdot
C_0(j)\cupdot C_1(j) \cupdot N(n)\cupdot\{c_0(j),c_1(j)\}\,.\] 
Here $N(j) = M(j)\cupdot D(j)$ and $B(j)$ 
are as $N(j)$, $M(j)$, $D^s(j)$ and $B^t(j)$ in $G_n(s,t)$, \ie \[|B(j)| = 
|D(j)| = 2, |C_i(1)| = |M(1)| = 4, |C_i(k+1)| = |M(k+1)| = |M(k)| + 3\] for 
all $k\in\{2,\ldots,j\}$ and $i\in\{0,1\}$. Furthermore, $B(j) = 
\{b_0(j), b_1(j)\}$. Analogously to the graphs $G_n$ we call the set of vertices
$V_\exists\setminus V(S_\phi^{j+1})$  an \emph{existential level.}

The set of edges is
\begin{align*}
                    E(S_\phi^j) & = E(S_\phi^{j+1}) \cup {N(j) \choose 2} \cup 
\bigcup_{i=0}^{1}
{C_i(j) \choose 2} \cup {A(j) \choose 2}\\
                     &\cup \bigcup_{i=0}^1 \Bigl( \bigl(N(j) \times  
\{c_i(j)\}\bigr) \cup
                     \bigl(\{c_i(j)\} \times C_i(j)\bigr)\\
                     & \hspace{3cm}\cup \bigl(C_i(j) \times D(j)
                     \bigr) \cup \bigl(C_i(j) \times
                     \{b_i(j)\}\bigr) \Bigr) \\
                    & \cup
                     \bigl(B(j) \times A(j)\bigr) \cup
                     \bigl(A(j) \times B(j) \bigr)\cup \bigl(A(j) \times 
M(j)\bigr)\\
                    & \cup \bigl(N(j) \times V(S_\phi^{j+1})\bigr) \cup 
\bigl(A(j)
                     \times V(S_\phi^{j+1})\bigr)\\
                    & \cup \bigl( V(S_\phi^{j+1})
                     \times A(j) \bigr) \cup E(j)\,.
\end{align*}
Here $E(j)$, the edges connecting $F_\psi$ to the new level are
defined as follows. Let $K_C = \{v_1^C,\ldots,v_{r(C)}^C\}$ be a
clique in $F_\psi$ corresponding to a clause $C = L_1\lor{}\ldots
{}\lor L_r$. If $X_j = L_i$, then $(v_i^C,b_1(j))\in E(j)$. If $\neg X_j =
L_i$, then $(v_i^C,b_0(j))\in E(j)$. Otherwise (\ie if $X_j$ does not
appear in $C$) $\{(v_i^C,b_0(j)),(v_i^C,b_1(j))\}\subseteq E(j)$ for all $i\in\{1,\ldots,r(C)\}$.

In the second case $Q_j = \forall$. Then $V(S_\phi(j)) = V_\forall(j) = 
V_\exists(j) \setminus \{c_0(j),c_1(j)\}$ and the edges are defined as in 
$H_\phi$. We call the set of the new vertices a \emph{universal level.} 

We are going to show that $r+1$ cops win on $S_\phi$ if and only if the 
existential player wins $\MCGAME(\phi)$. For that we need some 
lemmata. Our global assumption is that there are $r+1$ cops in total.

Let $L = \{\ell_1,\ldots,\ell_r\}$ be the
set of level numbers and let $b = (b_{\ell_1},\ldots, b_{\ell_r})$ be a tuple of
bits. (Recall that the levels are numbered such that the biggest is in
$S^r_\phi$, but not in $S^{r-1}_\phi$.) Define
\[O^b(\ell) = \left( \bigcup_{\ell'>\ell} A(\ell')\right)\cup \{b_i(\ell_j) \mid
b_{\ell_j} = i, \ell_j>\ell\}\,.\]


The first lemma states what the cops can achieve in an existential level.

\begin{lemma}\label{lemma:cops_invariant}
  Let $\ell$ be the number of an existential level.  Let the cops occupy
  $O^b(\ell)$ for some bit vector $b$ and let the robber component be
  $S^{\ell}_\phi$.  For both $i\in\{0,1\}$ the cops have a strategy that allows
  them either to capture the robber or to expel him from level~$\ell$ such that
  the cops occupy precisely $O^b(\ell)\cup A(\ell) \cup \{b_i(\ell)\}$.
\end{lemma}
\begin{proof}
  Note that all cops in $O^b$ are all tied and there are $|N(\ell)|+1$ free
  cops.  The strategy is as follows. One cop is placed on $c_{1-i}(\ell)$. This
  creates two components of $S_\phi^\ell$: the one induced by $C_0(\ell)$ and
  the one induced by $N(\ell)$, $A(\ell)$, $B(\ell)$, $C_i(\ell)$ and
  $S_\phi^{\ell-3}$. If the robber is in $C_i(\ell)$, the remaining free cops
  expel him from there and the robber is in the other component. Then the cops
  occupy $N(\ell)$ and then the cop from $c_{1-i}(\ell)$ occupies
  $b_i(\ell)$. If the robber is in $C_i(\ell)$ and stays there, he is captured
  there by the cops from $N(\ell)$, so assume that the robber goes either to the
  component induced by $b_{1-i}(\ell)$ or to the component induced by $A(\ell)$
  and $S_\phi^{\ell-3}$. In any case the cops leave $D(\ell)$ and occupy
  $A(\ell)$. If the robber remains in $b_{1-i}(\ell)$, he is captured by the cop
  from $b_i(\ell)$, so assume that he goes to $S_\phi^{\ell-3}$. We obtain the
  required position.
\end{proof}

The following lemma describes what the robber can achieve in a level.

\begin{lemma}\label{lemma:robber_invariant}
  Let $\ell$ be the number of a level.  Let the cops occupy
  $O^b(\ell)$ for some bit vector $b$ and let the robber component be
  $S^{\ell}_\phi$. The robber has a strategy that permits him either to win or
  to reach a position where the robber is in $S_\phi^{\ell-3}$ and the cops
  occupy $O^b(\ell)$, $A(\ell)$ and at least one of $b_i(\ell)$. If the level is
  universal, the robber can additionally enforce $b_0(\ell)$ or $b_1(\ell)$. Furthermore,
  this robber strategy is winning if there are only $|N(\ell)|$ free cops.
\end{lemma}
\begin{proof}
  Again, all cops in $O^b$ are all tied and there are $|N(\ell)|+1$ free cops.
  The robber stays in $N(\ell)$ until it is completely occupied by the cops. In
  that position, one of $c_i(\ell)$ is not occupied by cops, and the robber runs
  to $C_i(\ell)$ and plays as in the proof of
  Theorem~\ref{thm:super-poly}. Note that the cops from $A(\ell')$ and from
  $B(\ell')$ for all $\ell'>\ell$ cannot be removed.
\end{proof}

\begin{lemma}
  There is a winning strategy for $r+1$ cops on $S_\phi$ if and only if
  $\phi$ is true.
\end{lemma}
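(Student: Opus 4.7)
The plan is to show a level-by-level correspondence between plays on $S_\phi$ and plays of $\MCGAME(\phi)$: cop moves in existential levels encode the existential player's choice of $\beta(X_j)$, robber moves in universal levels encode the universal player's choice, and the outcome of the final phase inside $F_\psi$ is governed by whether the accumulated valuation $\beta$ satisfies $\psi$.

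I would proceed by induction on the level $j$ from the outermost ($j=1$) inward, maintaining the invariant that just before the robber enters $M(j)$, the cops occupy exactly $O^b(j)$ for the bit tuple $b$ encoding the choices already made, and exactly $|N(j)|+1$ cops are free. For the inductive step at an existential level, \Cref{lemma:cops_invariant} lets the cops commit to any bit $b_j\in\{0,1\}$ of their choice via the placement on $c_{1-b_j}(j)$; at a universal level the robber gets the analogous freedom by the construction from Section~\ref{sec:copnhard}. Crucially, these choices are \emph{forced} up to irrelevant changes: any deviation (leaving $A(j)$ uncovered, putting the extra cop on a wrong vertex, or removing a tied cop early) is punished, because with only $|N(j)|$ free cops the robber wins from level $j$ by \Cref{lemma:robber_invariant}, adapting the forcing argument from \Cref{thm:super-poly}.

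For the direction $\phi$ true $\Rightarrow$ cops win, fix a winning strategy $\sigma_\exists$ for $\exists$ in $\MCGAME(\phi)$; at each existential level the cops pick the bit prescribed by $\sigma_\exists$ given the robber's choices at preceding universal levels. When the robber finally enters $F_\psi$, the accumulated $b$ satisfies $\psi$, so any clause $C$ the robber chooses inside $F_\psi$ contains a literal made true by $b$; by the analysis preceding \Cref{lemma:use_one_cop}, the corresponding cop on $b_{i_\ell}(\ell)$ is the one not tied by an edge from $K_C$ and can be reused to capture. Conversely, if $\phi$ is false the robber follows a winning $\sigma_\forall$ at universal levels, the accumulated $b$ falsifies $\psi$, and he can enter a clause $C$ with no satisfied literal; then every $b_i(\ell)$-cop remains reachable from $K_C$, and by \Cref{lemma:use_one_cop} he wins.

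The main obstacle is the ``forced play'' part of the invariant at existential levels: the new vertices $c_0(\ell),c_1(\ell)$ add a genuine cop choice, and I must verify that no alternative use of the extra cop (e.g.\ placing it in $N(\ell)$, in $B(\ell)$, or inside $S_\phi^{j+1}$ before $A(\ell)$ is secured) lets the cops keep pace with the robber, so that their only winning option is to choose a bit at level $j$. This, together with the cop-count bookkeeping --- one tied cop per level on $A(\ell)\cup\{b_{i_\ell}(\ell)\}$ plus a single free cop for the final phase on $v\in F_\psi$, totalling $r+1$ --- closes the reduction.
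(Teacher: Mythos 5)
Your proposal takes essentially the same route as the paper: both directions are proved by the same level-by-level simulation between plays on $S_\phi$ and $\MCGAME(\phi)$, maintaining the same invariant (cops on $O^b(\ell)$ with the stated number of free cops) and invoking \Cref{lemma:cops_invariant}, \Cref{lemma:robber_invariant} and the clause-gadget analysis in exactly the roles the paper assigns them. The only cosmetic difference is that you state the ``only if'' direction contrapositively (the robber wins when $\phi$ is false, via a winning strategy of the universal player), whereas the paper extracts a winning strategy for the existential player from a winning cop strategy; these are the same argument.
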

\begin{proof}
  Assume that~$r$ cops have a winning strategy~$\sigma$ on~$S_\phi$. Without
  loss of generality we assume that one cop is placed in $B(\ell)$ in every
  level $\ell$, even if not enforced by the robber. In $\MCGAME(\phi)$, the
  existential player simulates the cops and robber game on $S_\phi$ by
  translating the moves of the universal player into robber moves and
  translating cop moves (according to~$\sigma$) into his choices of the
  existentially quantified variables. Assume that we reached a position~$P$ in
  the cops and robber game and a position $P_i$ (for $i\ge 1$) in the
  $\MCGAME(\phi)$ such that the following invariant (INV) holds.
\begin{itemize}
 \item Exactly the values $b_j = \beta(X_j)$ for the variables $X_j$ where 
$j\in\{1,\ldots,i-1\}$ are already chosen;

\item in the cops and robber game, the robber component is $S_\phi^{r+1-i}$ with
  the biggest level number $\ell = \ell(i)$;

\item the cops occupy $O^b(\ell)$ for the tuple $b =
  (b_1,\ldots,b_{i-1})$ and nothing else. 
\end{itemize}
Then there are exactly $\ell+1$ free cops. If $Q_i = \forall$, the
universal player chooses a value $b_i = \beta(X_i)$ for $X_i$. Then
the existential player simulates the cops and robber game playing for
the cops according to~$\sigma$ from position~$P$ and for the robber as in
Lemma~\ref{lemma:robber_invariant} such that the robber is expelled
from level $\ell$ and $A(\ell)$ and $b_i(\ell)$ are occupied by cops. The number of
free cops suffices for that. It is straightforeward to check that the
above invariant holds for $i+1$ and for the position of the cops and
robber game where the robber is blocked in the next level, \ie in level~$\ell-3$.

If $Q_i = \exists$, the existential player simulates the cops and robber 
game from~$P$ until the robber is expelled from level~$\ell$ according to 
Lemma~\ref{lemma:cops_invariant}. Here, the cops play according
to~$\sigma$ and the robber plays arbitrarily, but such that he is not
captured in level~$\ell$. For example, the robber goes directly to
$S_\phi^{\ell -3}$. Again, there are enough free cops for 
the simulation. Then exactly one of $b_0(\ell)$ and $b_1(\ell)\}$ is  occupied
by a cop. If it is $b_0(\ell)$, the existential player sets $\beta(X_i) = 0$, 
otherwise $\beta(X_i) = 1$. Again, the invariant holds.

When all variables have their values, the universal player chooses a
clause~$C$ and the existential player simulates in the cops and robber
game the move of the free cop to vertex~$v$ in $F_\psi$ and the move
of the robber to~$K_C$. 

Let us revise the current position. As the cops have played according to~$\sigma$
and~$\sigma$ is a winning strategy, there is still a free cop. The cops completely 
occupy every $A(\ell)$ and for all
$\ell$ exactly one of $b_0(\ell)$ and $b_1(\ell)$. Every vertex in all
$A(\ell)$ and $v$ are still reachable from the robber component, so the free cop
is in some $b_i(\ell)$. As the cop is free, there is no edge from any
$v_j^C$ to $b_i(\ell)$, \ie there is the edge from some $v_j^C$ to
$b_{1-i}(\ell)$. By construction of $S_\phi$, $X_j$ appears
in~$C$ and if $i=0$, then $X_j$ is negativ and if $i=1$, then
$X_j$ is positiv in~$C$. In the first case $b_i = 0$ and
$\beta(X_{v(\ell)}) = 0$, so~$C$ is satisfied and the existential
player wins by choosing $\neg X_j$. The second case is
symmetric.

For the other direction assume that the existential player has a winning 
strategy. We show that~$r+1$ cops have a winning strategy. The cops simulate 
the game $\MCGAME(\phi)$ while playing on $S_\phi$ by translating the moves of 
the 
robber to choices of the universal player and the choices of the existential 
player to their moves. Assume as before that (INV) holds for
some~$i\ge 1$.

There are two cases: level~$\ell$ is either existential or universal. In any
case, there are $\ell + 1$ free cops and if the level is universal,~$\ell$ of
them occupy $N(\ell)$. The robber escapes to some $C_i(\ell)$ for an
$i\in \{0,1\}$ or goes to the component consisting of $A(\ell)$, $B(\ell)$ and
$S_\phi^{\ell-3}$. If the robber is in $C_i(\ell)$ the last remaining cop is
placed on $b_i(\ell)$ and the robber proceeds to
$\{b_{1-i}(\ell)\}\cup A(\ell) \cup V(S_\phi^{\ell-3})$, otherwise the cop is
placed on $b_i(\ell) = b_0(\ell)$ and the robber is in
$\{b_1(\ell)\}\cup A(\ell) \cup V(S_\phi^{\ell-3})$. Now the cops from $D(\ell)$
occupy $A(\ell)$ and the robber goes to $S_\phi^{\ell-3}$ (if he goes to the
component induced by the free vertex of $B(\ell)$, he loses
immediately). Finally, the cops simulate the choice of the universal player:
$\beta(X_{v(\ell)}) = i$. It is easy to see that the invariant holds.

If the level is existential, the cops look up what value the strategy for the 
existential player in $\MCGAME(\phi)$ prescribes to choose for 
$X_{v(\ell)}$: $\beta(X_{v(\ell)}) = b_i \in \{0,1\}$. Then according to 
Lemma~\ref{lemma:cops_invariant} the cops can play such that the invariant 
holds again.

When the play arrives $F_\psi$, there is one free cop that is placed on~$v$ and 
the robber goes to some $K_C$ for a clause~$C$. As the existential player has a 
winning strategy, there is some literal $L_j$ in~$C$ that is satisfied 
by~$\beta$. Without loss of generality, $L_j = X_i$ and $\beta(X_i) = 1$. Then 
$b_1(\ell(i))$ is occupied by a cop, but there is no edge from~$v_i^C$ to 
$b_1(\ell(i))$. Recall that in $C$ every variable appears at most once, so 
there is no edge from~$C$ to $b_1(\ell(i))$ and the cop from $b_1(\ell(i))$ is 
free. The cops win by Lemma~\ref{lemma:use_one_cop}.
\end{proof}


\section{Conclusion}
We showed that \dagw cannot be computed efficiently in the classical
sense (assuming \ptime${}\neq{}$\pspace).  It would be interesting to
find (fixed-parameter tractable) algorithms computing constant factor
approximations of an optimal DAG decomposition. Another approach to
\dagw would be to show that \dagw and \kw are bounded in each other,
as deciding \kw is in \nptime. It is known that \dagw is
bounded in \kw by a quadratic function~\cite{KaiserKreRabSie14}.

\section*{References}
\bibliographystyle{elsarticle-harv}
\bibliography{bibl}

\end{document}